\newtheorem{theorem}{Theorem}
\begin{document}
	
	\title{Moment Generating Function of the AoI in Multi-Source Systems with Computation-Intensive Status Updates  
	\author{  \IEEEauthorblockN{Mohammad Moltafet and  Markus Leinonen  }
	\IEEEauthorblockA{Centre for Wireless Communications -- Radio Technologies\\
		University of Oulu, Finland\\
		e-mail: $\{$mohammad.moltafet, markus.leinonen$\}$@oulu.fi}	
	\and
	\IEEEauthorblockN{  Marian Codreanu}
	\IEEEauthorblockA{Department of Science and Technology\\
		Link\"{o}ping University, Sweden\\
		e-mail: marian.codreanu@liu.se}
}
}

	\maketitle

\begin{abstract}
	We consider a multi-source status update system in which status updates are transmitted as packets containing the measured value of the monitored process and a time stamp representing the time when the sample was generated. The packets of each source are generated according to the Poisson	process and the packets are served according to an exponentially
	distributed service time. 	
	We assume that the received status update packets need further processing before being used (hence, computation-intensive). This is mathematically modeled by introducing an additional server at the sink node.	
	The sink server serves the packets according to an exponentially distributed service time. 
	We introduce two packet management policies, namely, i) a preemptive policy and ii) a blocking policy  and derive the moment generating function (MGF) of the AoI of each source under both policies.  
	In the preemptive policy, a new arriving packet preempts any possible packet that is currently under service regardless of the packet's source index.
	In the blocking policy, when a server  is busy at the arrival instant of a packet the arriving packet is blocked and cleared. We assume that the same preemptive/blocking policy is employed in both transmitter and sink servers. Numerical results are provided to assess the results.
	
	\emph{Index Terms--} Age of information (AoI), stochastic hybrid systems (SHS), computation-intensive status updates. 
	
\end{abstract}

 	\section{Introduction}\label{Introduction}

 Freshness of  status information of various physical processes
 is a key performance enabler in many time-critical applications of wireless sensor networks (WSNs),
 e.g., surveillance in smart home systems, remote surgery,  intelligent
 transportation systems, and drone control. 
 Recently, the age of information (AoI) was proposed as a destination-centric  metric   to measure the information freshness  in status update systems \cite{5984917,6195689,6310931}.
 A status update packet contains the measured value of a monitored process and a time stamp representing the time when the sample was generated. Due to wireless channel access, channel errors, and fading etc., communicating a status update packet through the network experiences a random delay.
 If at a time instant $t$, the most recent status update available at the sink contains the time stamp $U(t)$, AoI is defined
 as the random process $\Delta(t)=t-U(t)$.
 Thus, the AoI measures for each sensor the time elapsed since the last available status update packet was generated at the sensor.

 The  work in \cite{8469047} introduced a powerful technique, called \textit{stochastic hybrid systems} (SHS), to calculate the average AoI.
 %
 In \cite{9103131}, the authors extended the SHS analysis to calculate the moment generating function (MGF) of the AoI. 
 The SHS technique has  been used to analyze the AoI in various queueing models \cite{8437591,8406966,9013935,9048914,moltafet2020sourceawareage,Moltafet2020mgf,9162681,9174099}.
 

 In this paper, we consider a multi-source  system with computation-intensive status updates in which the embedded information in each packet is not available until being processed by a server  at the sink.
 %
  An autonomous driving system in which images are status updates could be an example of the considered system model. In this system, further processing is needed at the sink to expose the embedded status information in the images.
  We introduce two packet management policies,  namely, i)~a preemptive policy and ii)~a blocking policy, and derive the MGF of the AoI of each source under the policies.  According to the  preemptive policy, a new arriving packet preempts any possible packet that is currently under service regardless of the packet's source index.
 According to the blocking policy, when a server  is busy at the arrival instant of a packet the arriving packet is blocked and cleared. We assume that the same preemptive/blocking policy is employed in both transmitter and sink servers. Numerical results are provided to assess the results.
 
  The most related work to this paper is \cite{9001219} where the authors derived the average AoI for a single-source  system where the computation-intensive status updates are generated according to the zero-wait policy. 
 According to the zero-wait policy, a new packet is generated immediately after the previous one is  available at the sink. 
Whereas  \cite{9001219} derives  the average AoI for a single-source setup, we consider multiple sources  and derive the MGF of the AoI. In addition, \cite{9001219} considers the zero-wait policy whereas, we consider random arrivals.

 \section{System Model}\label{System Model}
 We  consider a status update system consisting of two independent sources\footnote{
 Since we have multiple sources of Poisson arrivals, to evaluate the AoI of one source, we can consider two sources without loss of generality.}, one transmitter server, one sink  server, and one  sink, as depicted in Fig. \ref{computationmodel}.
 Each source observes a random process at random time instants. The  sink is interested in timely information about the status of these random processes. Status updates are transmitted as packets, containing the measured value of the monitored process and a time stamp representing the time when the sample was generated. We assume that  the packets  of sources  1 and 2   are generated according to the Poisson process with rates  $\lambda_1$ and $\lambda_2$, respectively, and      
 the packets are served by the transmitter server according to an exponentially distributed service time with mean ${1}/{\mu}$. We assume that the embedded information in each packet is revealed to the sink only after being processed by the sink server which serves the packets according to an exponentially distributed service time with mean ${1}/{\alpha}$.
 
 Let  $\rho_1={\lambda_1}/{\mu}$ and $\rho_2={\lambda_2}/{\mu}$ be the  load of  source 1 and 2, respectively. 
 The  packet generation in the system follows the Poisson process with rate
 $\lambda=\lambda_1+\lambda_2$, and the overall load at the transmitter server is
 $\rho=\rho_1+\rho_2={\lambda}/{\mu}$.   

	\begin{figure}
	\centering
	\includegraphics[width=0.9\linewidth,trim = 0mm 0mm 0mm 0mm,clip]{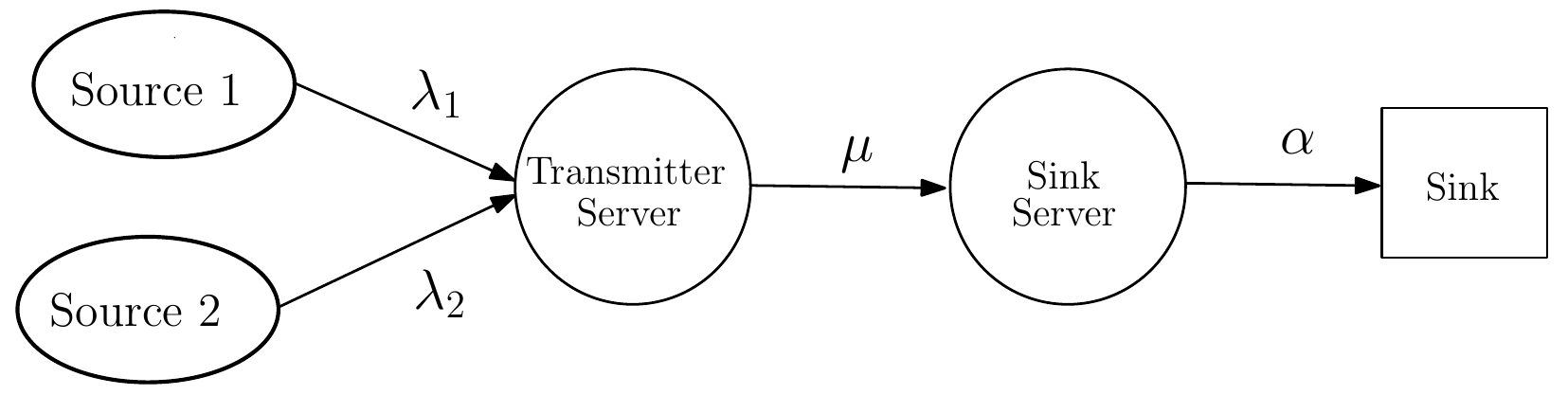}		
	\caption{The considered status update system. 
	 }  
	\label{computationmodel}
			\vspace{-6mm}
\end{figure}
\subsection{Packet Management Policies} \label{SGP-SGP Packet Management Policy}
%
%
%
According to the  preemptive policy, a new arriving packet preempts any possible packet that is currently under service regardless of the packet's source index.
According to the blocking policy, when a server  is busy at the arrival instant of a packet the arriving packet is blocked and cleared. We assume that the same preemptive/blocking policy is employed in both transmitter and sink servers.

\subsection{Summary of the Main Results}
In this paper, we derive the MGF of the AoI for each source under both considered policies which are summarized by the following two theorems. 
\begin{theorem}\label{cMGFtheo1}
	The MGF of the  AoI of source 1 under the preemptive policy is given as
	\begin{align}\label{cmgf1}
		M_{\Delta_1}(s)=\dfrac{\rho_1\bar\alpha(\bar s- \bar\alpha - \rho - 1)}{\bar s(1 - \bar s)(\rho - \bar s)(\rho + 1 - \bar s)+{\bar \alpha}^2\psi_1+\bar \alpha\psi_2},
	\end{align}
	where $\bar s=s/\mu$, $\bar \alpha=\alpha/\mu$, and 
	\begin{align}\label{psi1psi}
	&\psi_1=\rho \bar s -\rho_1+\bar s(1-\bar s),\\&\nonumber
	\psi_2=\rho_1^2(\bar s-1)+\rho_1\rho_2(2\bar s-1)+\rho_1(-1+4\bar s-3{\bar s}^2)+\\&\nonumber~~~~~~~\rho_2\bar s(\rho_2+2-3\bar s)+\bar s(1-3\bar s+2{\bar s}^2).
	\end{align}
\end{theorem}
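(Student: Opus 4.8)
The plan is to apply the stochastic hybrid systems (SHS) machinery for the MGF of the AoI developed in \cite{9103131}, which extends the average-AoI SHS of \cite{8469047}. First I would set up the SHS model. The discrete state $q(t)$ must record the joint occupancy of the two servers, i.e., for the transmitter server and the sink server whether each is empty, holds a source-$1$ packet, or holds a source-$2$ (``irrelevant'' for source $1$) packet. The continuous state $\mathbf{x}(t)=(x_0,x_1,x_2,\dots)$ tracks $x_0=\Delta_1(t)$, the current AoI of source $1$, together with the age of the relevant source-$1$ packet residing in the sink server and in the transmitter server whenever such packets exist. Between jumps every age grows at unit rate, so $\dot x_j=1$ and hence $\tfrac{d}{dt}e^{s x_j}=s\,e^{s x_j}$.

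Next I would enumerate the transition set $\mathcal{L}$. The events are a source-$1$ arrival (rate $\lambda_1$), a source-$2$ arrival (rate $\lambda_2$), a transmitter completion (rate $\mu$), and a sink completion (rate $\alpha$). For each transition $l$ I would fix the origin/target discrete states and the linear reset map $A_l$ acting on $\mathbf{x}$, respecting the preemptive rule: an external arrival replaces whatever occupies the transmitter, and a transmitter completion pushes its packet into the sink server, preempting whatever is there. The two age-relevant resets are (i) a transmitter completion carrying a source-$1$ packet, which copies the transmitter-side age into the sink-side age component, and (ii) a sink completion of a source-$1$ packet, which resets $x_0$ to the sink-side age; resets that create a fresh packet set the corresponding age to $0$ (so $e^{sx_j}\to 1$).

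With the model fixed I would first solve the stationary distribution $\bar\pi_q$ of the underlying continuous-time Markov chain on $\mathcal{Q}$ from its balance equations, and then invoke the MGF theorem of \cite{9103131}. Defining $v_{qj}(s)=\lim_{t\to\infty}\mathbb{E}[e^{s x_j(t)}\mathds{1}(q(t)=q)]$, each unknown obeys a balance equation of the form $\big(\sum_{l:q_l=q}\lambda^{(l)}-s\big)v_{qj}(s)=\sum_{l:q'_l=q}\lambda^{(l)}\,\xi^{(l)}_{qj}(s)$, where the in-flow term $\xi^{(l)}_{qj}(s)$ equals $v_{q_l,\sigma_l(j)}(s)$ under a coordinate copy and equals $\bar\pi_{q_l}$ wherever the reset sends an age to $0$. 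This yields a finite linear system in the $v_{qj}(s)$ with the $\bar\pi_q$ as known constants, and the sought MGF is $M_{\Delta_1}(s)=\sum_{q\in\mathcal Q}v_{q0}(s)$.

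The final step is to solve this system and collect terms. I expect the main obstacle to be twofold: (a) correctly specifying the discrete state space and all reset maps $A_l$ under the ``relevant/irrelevant source'' bookkeeping and the two-stage preemption, since a single error there propagates through the whole system; and (b) the algebraic simplification of the solution to the compact closed form \eqref{cmgf1}--\eqref{psi1psi}. Substituting the normalized quantities $\bar s=s/\mu$, $\bar\alpha=\alpha/\mu$, $\rho_1=\lambda_1/\mu$, $\rho_2=\lambda_2/\mu$ and factoring the denominator as $\bar s(1-\bar s)(\rho-\bar s)(\rho+1-\bar s)+\bar\alpha^2\psi_1+\bar\alpha\psi_2$ is the tedious but routine part; checking $M_{\Delta_1}(0)=1$ and differentiating at $s=0$ to recover a consistent average-AoI expression would serve as sanity checks.
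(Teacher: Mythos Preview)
Your proposal is correct and follows essentially the same route as the paper: a nine-state SHS (three occupancy levels per server), a three-component continuous state tracking the AoI of source~1 and the age-relevant quantities in each server, enumeration of the arrival/completion transitions with their reset maps (including the detail that a transmitter completion preempts the sink server), solution of the stationary distribution and the linear system for the $v_{q0}(s)$, and the final summation $M_{\Delta_1}(s)=\sum_q v_{q0}(s)$. The only cosmetic difference is that the paper defines the auxiliary coordinates as ``what $\Delta_1$ would become if the packet in that server were delivered'' rather than ``the age of the source-$1$ packet there''; this makes the irrelevant coordinates well-defined in every state and slightly streamlines the reset maps, but does not change the method.
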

\begin{proof}
	The proof of Theorem \ref{cMGFtheo1} appears in  Section \ref{Source-Agnostic Preemptive Policy}.
\end{proof}

\begin{theorem}\label{cMGFtheo2}
	The MGF of the  AoI of source 1 under the blocking policy is given by \eqref{cmgf2} (shown at the top of the next page),
		\begin{figure*}
	\begin{align}\label{cmgf2}
		M_{\Delta_1}(s)=\dfrac{\rho_1{\bar \alpha}^2(\bar s-\bar \alpha-\rho)(\rho+1-\bar s)(\bar \alpha+\rho+1-\bar s)}{(\bar \alpha-\bar s)(1-\bar s)(\bar \alpha+\rho)(\rho+1)(\bar s(1 - \bar s)(\rho - \bar s)(\rho + 1 - \bar s)+{\bar \alpha}^2\psi_1+\bar \alpha\psi_2)}
	\end{align}
		\end{figure*}
	where $\bar s=s/\mu$, $\bar \alpha=\alpha/\mu$, and  $\psi_1$ and $ \psi_2 $ are given in \eqref{psi1psi}.
\end{theorem}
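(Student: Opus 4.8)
The plan is to derive \eqref{cmgf2} through the stochastic hybrid systems (SHS) framework of \cite{8469047,9103131}, following the same construction used for Theorem~\ref{cMGFtheo1} but replacing the transition set by the one induced by the blocking policy. First I would specify the discrete mode $q$ and the continuous age vector $\mathbf{x}$. Since each stage is a single-buffer (M/M/1/1-type) server, the mode records the occupancy of the transmitter server and of the sink server together with the source identity (source~1 versus source~2) of the packet each is currently serving, because this identity is what decides whether a completion at the sink refreshes source~1's information. The continuous state is $\mathbf{x}(t)=[x_0(t),x_1(t),\ldots]$, where $x_0$ is the AoI of source~1 and the remaining components track the ages that a source-1 packet residing in the transmitter server or in the sink server would carry to the monitor upon the relevant completion; in every active mode all components grow at unit rate, $\dot{\mathbf{x}}=\mathbf{1}$.

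Next I would enumerate the transitions $l\in\mathcal{L}$ with their rates, source/target modes $q_l\to q_l'$, and linear reset maps $\mathbf{x}'=\mathbf{x}\mathbf{A}_l$: source-1 and source-2 arrivals that seize an idle transmitter server (rates $\lambda_1,\lambda_2$); the defining feature of the blocking policy, namely arrivals (and, at the sink entrance, transmitter completions) that find a busy server and are cleared, which appear as self-loops that leave $\mathbf{x}$ unchanged; transmitter completions (rate $\mu$) that pass the packet into the sink server when it is free; and sink completions (rate $\alpha$) that deliver to the monitor, resetting $x_0$ to the age carried by the delivered packet whenever that packet is from source~1. Because each $\mathbf{A}_l$ has $0/1$ entries mapping every new age component to a single existing one, the exponential factorizes as $e^{s x_j'}=e^{s x_{\sigma_l(j)}}$, which is exactly what keeps the MGF recursion linear.

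I would then compute the stationary mode probabilities $\bar\pi_q$ from the balance equations of the embedded continuous-time Markov chain; for the blocking policy these factorize across the two single-buffer stages and should account for the constant (i.e.\ $\bar s$-free) factors $(\bar\alpha+\rho)(\rho+1)$ in the denominator of \eqref{cmgf2}. With $v_{qj}(s)=\lim_{t\to\infty}\mathbb{E}[e^{s x_j(t)}\mathds{1}(q(t)=q)]$ and $\mathbf{v}_q(s)=[v_{q0}(s),v_{q1}(s),\ldots]$, applying the extended generator to the test function $e^{s x_j}\mathds{1}(\hat q=q)$ and imposing stationarity (the drift $\dot x_j=1$ contributing the $+s$ term) yields the linear system
\[\Big(\sum_{l\in\mathcal{L}_q}\lambda^{(l)}-s\Big)\mathbf{v}_q(s)=\sum_{l:\,q_l'=q}\lambda^{(l)}\,\mathbf{v}_{q_l}(s)\,\mathbf{A}_l,\qquad q\in\mathcal{Q},\]
with boundary data $\mathbf{v}_q(0)=\bar\pi_q\mathbf{1}$. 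Solving it (e.g.\ by Cramer's rule) gives each $v_{q0}(s)$ as a ratio of determinants whose common denominator is the characteristic polynomial $D(s)=\bar s(1-\bar s)(\rho-\bar s)(\rho+1-\bar s)+\bar\alpha^2\psi_1+\bar\alpha\psi_2$, i.e.\ the same factor that governs Theorem~\ref{cMGFtheo1}; the claim then follows from $M_{\Delta_1}(s)=\sum_{q}v_{q0}(s)$ after simplification.

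The main obstacle is twofold. Conceptually, the mode set must be fine enough to distinguish source-1 from source-2 packets at each stage so that the $x_0$-reset fires only on genuine source-1 deliveries, while the blocking of arrivals at both the transmitter and the sink entrance is correctly encoded as non-resetting self-loops; this bookkeeping is where mistakes are easiest to make and differs from the preemptive case only in that busy-server arrivals no longer overwrite the in-service age. Technically, the remaining effort is the symbolic elimination of the above system and the factorization of the resulting rational function into the compact form \eqref{cmgf2}. I would guard against errors with the consistency checks $M_{\Delta_1}(0)=1$ and the recovery of the mean AoI from $M'_{\Delta_1}(0)$, and by confirming that the shared denominator $D(s)$ coincides with the one in \eqref{cmgf1}.
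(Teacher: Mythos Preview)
Your plan uses the right SHS machinery but departs from the paper's proof in how the discrete mode is chosen. You propose a source-aware state (recording, for each server, whether it holds a source-1 or a source-2 packet), essentially replicating the nine-state chain of the preemptive case. The paper instead collapses the blocking chain to only four modes $\{\text{II},\text{BI},\text{IB},\text{BB}\}$, tracking merely busy/idle at each server; the source identity is absorbed into the continuous coordinates by assigning $x_2'$ (and later $x_1'$) a value that, upon eventual delivery, leaves $x_0$ unchanged whenever the packet is from source~2 (e.g.\ $x_2'=x_1$ for a source-2 arrival versus $x_2'=0$ for source~1). This yields a much smaller instance of \eqref{aslieq} (four $3$-vectors rather than nine) and is why the factor $(\bar\alpha+\rho)(\rho+1)$ coming from $\bar{\boldsymbol{\pi}}$ appears so directly. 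Your larger chain is a legitimate alternative and, with reset maps matching the paper's, would also reach \eqref{cmgf2}, though after considerably more elimination.

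There is, however, a genuine gap in your fixed-point equation. The recursion you wrote, $(\sum_{l\in\mathcal{L}_q}\lambda^{(l)}-s)\,\mathbf{v}_q=\sum_{l}\lambda^{(l)}\mathbf{v}_{q_l}\mathbf{A}_l$, is correct only when no reset sends an age coordinate to zero. A fresh source-1 arrival does exactly that, so the corresponding column of $\mathbf{A}_l$ is the zero vector and the contribution to $\mathbb{E}[e^{sx_j'}\mathds{1}(\cdot)]$ is $e^{s\cdot 0}\bar\pi_{q_l}=\bar\pi_{q_l}$, not zero. The paper's \eqref{aslieq} carries precisely this inhomogeneous term via $\bar\pi_{q_l}\mathbf{1}\hat{\mathbf{A}}_l$, with $\hat{\mathbf{A}}_l$ flagging the zero columns of $\mathbf{A}_l$; without it your system will fail your own check $\mathbf{v}_q(0)=\bar\pi_q\mathbf{1}$ on exactly those coordinates. (Relatedly, $\mathbf{v}_q(0)=\bar\pi_q\mathbf{1}$ is an a~posteriori consistency check, not boundary data: for each fixed $s$ the linear system is already determined.) One further caution: do not model a transmitter completion that finds the sink busy as a non-resetting self-loop. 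In the paper's transition table for the blocking policy the event $l=9$ ($\text{BB}\to\text{IB}$, rate $\mu$) changes the discrete state and carries the same reset map as $l=4$, i.e.\ $x_1'=x_2$; encoding it instead as ``cleared with $\mathbf{x}$ unchanged'' will produce an MGF different from \eqref{cmgf2}.
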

\begin{proof}
	The proof of Theorem \ref{cMGFtheo2} appears in  Section \ref{Non-Preemptive Policy}.
\end{proof}

\section{ The SHS Technique to Calculate MGF}\label{The SHS Technique to Calculate MGF}	
In the following, we briefly present how to use the SHS technique for our MGF analysis in Section \ref{AoI Analysis Using the SHS Technique}. We refer the readers to \cite{8469047,9103131} for more details.

The SHS technique models a queueing system  through the states $(q(t), \bold{x}(t))$, where  ${q(t)\in \mathcal{Q}=\{0,1,\ldots,m\}}$ is a continuous-time finite-state Markov
chain that  
describes the occupancy of the system and ${\bold{x}(t)=[x_0(t)\cdots x_n(t)]\in \mathbb{R}^{1\times(n+1)}}$ is a continuous
process that describes the evolution of age-related processes in the system. Following the approach in  \cite{8469047,moltafet2020sourceawareage}, we label the source of interest as source 1 and  employ the continuous process $ \bold{x}(t) $ to track the age of source 1 updates at the sink.

The Markov chain $q(t)$ can be presented as a graph $(\mathcal{Q},\mathcal{L})$ where 
each discrete state $q(t)\in \mathcal{Q}$ is a node of the chain and a (directed) link $ l\in\mathcal{L} $  from node $ q_l $ to node $q'_{l}$ indicates a transition from state $ {q_l \in \mathcal{Q}}$ to state ${q'_{l}\in \mathcal{Q}}$.

A transition occurs  when a packet arrives or departs in the system. Since the packets are generated according to the Poisson process and the service time is exponentially distributed, transition $l\in\mathcal{L}$ from state $ q_l $ to state $q'_{l}$ occurs with the  exponential rate $\lambda^{(l)}\delta_{q_l,q(t)}$,
where the Kronecker delta function $\delta_{q_l,q(t)}$ ensures that the transition $ l $ occurs only when the discrete
state $ q(t) $ is equal to $ q_l $.  When a transition $l$ occurs, the  discrete state $ q_l $ changes   to state $q'_{l}$, and the continuous state $\bold{x}$ is reset to $\bold{x}'$ according to a binary reset map matrix ${\bold{A}_l}\in\mathbb{B}^{(n+1)\times(n+1)}$ as ${\bold{x}'\!=\!\bold{x}\bold{A}_l}$. In addition, as long as  state $q(t)$ is unchanged we have  ${\dot{\bold{x}}(t)\!\triangleq\!\dfrac{\partial\bold{x}(t)}{\partial t}\!=\!\bold{1}}$, where $\bold{1}$ is the row vector $[1\cdots1]\!\in\! \mathbb{R}^{1\times(n+1)}$.

Note that unlike in a typical continuous-time Markov
chain, a transition
from a  state to itself  (i.e., a self-transition) is possible in  $q(t)\in \mathcal{Q}$. In the case of a self-transition, a reset of the continuous state $\bold{x}$ takes place, but the discrete state remains the same. In addition, for a given pair of states $\bar q,\hat q\in \mathcal{Q}$, there  may be multiple
transitions $ l $ and $ l' $ so that the discrete state changes from $ \bar q $ to
$ \hat q $ but the transition reset maps $\bold{A}_l $ and $ \bold{A}_{l'}$ are different (for more details, see  \cite[Section III]{8469047}). 

To calculate the MGF of the  AoI using the SHS technique, the state probabilities of the Markov chain,  the correlation vector between the discrete state $q(t)$ and the continuous state $\bold{x}(t)$, and  the correlation vector between the discrete state $q(t)$ and the exponential function $e^{s\bold{x}(t)},~s\in\mathbb{R}$, need to be defined. Let $\pi_q(t)$ denote the probability of being in state $q$ of the Markov chain. Let $\bold{v}_q(t)=[{v}_{q0}(t)\cdots{v}_{qn}(t)]\in\mathbb{R}^{1\times(n+1)}$  denote the correlation vector between the discrete state $q(t)$ and the continuous state $\bold{x}(t)$. Let $\bold{v}^s_q(t)=[{v}^s_{q0}(t)\cdots{v}^s_{qn}(t)]\in\mathbb{R}^{1\times(n+1)}$    denote the correlation vector between the state $q(t)$ and the exponential function $e^{s\bold{x}(t)}$. Accordingly, we have 
\begin{equation}
\pi_q(t)=\mathrm{Pr}(q(t)=q)=\mathbb{E}[\delta_{q,q(t)}], \,\,\,\forall q\in\mathcal{Q},
\end{equation}
\begin{equation}
\bold{v}_q(t)=[{v}_{q0}(t)\cdots{v}_{qn}(t)]=\mathbb{E}[\bold{x}(t)\delta_{q,q(t)}],\,\,\,\forall q\in\mathcal{Q},
\end{equation}
\begin{equation}
\bold{v}^s_q(t)=[{v}^s_{q0}(t)\cdots{v}^s_{qn}(t)]=\mathbb{E}[e^{s\bold{x}(t)}\delta_{q,q(t)}],\,\,\,\forall q\in\mathcal{Q}.
\end{equation}

Let $\mathcal{L}'_q$ denote the set of incoming transitions and  $\mathcal{L}_q$ denote the set of outgoing transitions for state $q$, defined as 
\begin{align}\nonumber
&\mathcal{L}'_q=\{l\in\mathcal{L}:q'_{l}=q\},~~~~\mathcal{L}_q=\{l\in\mathcal{L}:q_{l}=q\},\,\,\,\forall q\in\mathcal{Q}.
\end{align}
Following the ergodicity assumption of the Markov chain $q(t)$ in the AoI analysis \cite{8469047,9103131}, the state
probability vector $\boldsymbol{\pi}(t)=[\pi_0(t) \cdots \pi_m(t)]$ converges uniquely 
to the  stationary vector $\bar{\boldsymbol{\pi}}=[\bar{\pi}_0 \cdots \bar{\pi}_m]$ satisfying
\begin{align}\label{prob}
\bar{{\pi}}_q\textstyle\sum_{l\in\mathcal{L}_q}\lambda^{(l)}=\textstyle\sum_{l\in\mathcal{L}'_q}\lambda^{(l)}\bar{{\pi}}_{q_l}, \,\forall q\in\mathcal{Q},
\end{align}
\begin{align}\label{prob1}
\textstyle\sum_{q\in\mathcal{Q}}\bar{{\pi}}_q=1.
\end{align}
Further,  it has been shown in \cite[Theorem 1]{9103131} that under the ergodicity assumption of the Markov chain $q(t)$, if we can find a non-negative limit ${\bar{\bold{v}}_q=[\bar{v}_{q0} \cdots \bar{v}_{qn} ], \forall q\in \mathcal{Q}}$, for the correlation vector ${\bold{v}_q(t)}$ satisfying
\begin{align}\label{asleq}	
\bar{\bold{v}}_q\textstyle\sum_{l\in\mathcal{L}_q}\lambda^{(l)}=\bar{{\pi}}_q\bold{1}+\textstyle\sum_{l\in\mathcal{L}'_q}\lambda^{(l)}\bar{\bold{v}}_{q_l}\bold{A}_l, \,\,\,\forall q\in\mathcal{Q},
\end{align}
there exists $s_0>0$ such that for all $s<s_0$, $\bold{v}^s_q(t), \forall q\in \mathcal{Q},$ converges to  $\bold{\bar v}^s_q$ that satisfies
\begin{align}\label{aslieq}	
\bold{\bar v}^s_q\textstyle\sum_{l\in\mathcal{L}_q}\lambda^{(l)}\!=\!s\bold{\bar v}^s_q\!+\!\textstyle\sum_{l\in\mathcal{L}'_q}\lambda^{(l)}[\bold{\bar v}^s_{q_l}\bold{A}_l\!+\!\bar{{\pi}}_{q_l}\bold{1}\bold{\hat A}_l], \,\,\,\forall q\in\mathcal{Q},
\end{align}
where $\bold{\hat A}_l\in\mathbb{B}^{(n+1)\times(n+1)}$ is a binary  matrix whose ${k,j}$th element, $\bold{\hat A}_l(k,j)$, is given as 
\begin{align}\label{Ahat}
\bold{\hat A}_l(k,j)\!=\!\begin{cases}
1,\!&\!k\!=\!j, \text{and $j$th column of $ \bold{A}_l $ is a zero vector, } \\
0,\!&\!\text{otherwise.}
\end{cases}
\end{align}
Finally, the MGF of the state $\bold{x}(t)$, which is calculated by $\mathbb{E}[e^{s\bold{x}(t)}]$, converges to the stationary vector  {\cite[Theorem 1]{9103131}}
\begin{align}\label{AOIANAL}	
\mathbb{E}[e^{s\bold{x}}]=\textstyle\sum_{q\in\mathcal{Q}}\bold{\bar v}^s_q.
\end{align}
As \eqref{AOIANAL} implies, if we set the first element of continuous state $\bold{x}(t)$ to represent the AoI of source 1 at the sink, the MGF of the AoI of source 1 at the sink converges to 
\begin{align}\label{MGFage}
M_{\Delta_1}(s)=\textstyle\sum_{q\in\mathcal{Q}}{\bar v}^s_{q0}.
\end{align}
Thus,  the main challenge in calculating the MGF of the AoI of source 1 using the SHS technique reduces to deriving the first elements of correlation vectors $\bold{\bar v}^s_q,$ 
${\forall{q}\in\mathcal{Q}}$.

\section{ AoI Analysis Using the SHS Technique}\label{AoI Analysis Using the SHS Technique}

 In this section, we use  the SHS technique to calculate the MGF in \eqref{MGFage} of each source under the considered packet management policies described in Section \ref{SGP-SGP Packet Management Policy}. 
 
 \subsection{Preemptive Policy}\label{Source-Agnostic Preemptive Policy}

 The discrete states are $0=00,~1=10,~2=20,~3=01,~4=11,~5=21,~6=02,~7=12,~8=22$. State $q=a_1a_2$ indicates that  a packet of source $ a_1$  is in the transmitter server and a packet of source $ a_2 $ is in the sink server. Note that ${a_1=0}$ (resp. $ {a_2=0}$) indicates that the transmitter server (resp. the sink server) is idle.  
 
The continuous process is ${\bold{x}(t)=[x_0(t)~x_1(t)~x_2(t)]}$, where $x_0(t)$ is the current AoI of source 1 at time instant $t$, $\Delta_1(t)$, $ x_1(t)$ encodes  what $\Delta_1(t)$ would become if the packet that is in  the sink server is delivered to the sink at time instant $t$, and  $ x_2(t)$ encodes  what $\Delta_1(t)$ would become if the packet that is in  the transmitter server is delivered to the sink at time instant $t$.

Recall that  to calculate the MGF of the  AoI of source 1 in \eqref{MGFage} we need to find ${\bar v}^s_{q0}, \forall q\in\mathcal{Q},$ which are the solution of  the system of linear equations in  \eqref{aslieq}	with variables $\bold{\bar v}^s_q, \forall{q}\in\mathcal{Q}$. To form the system of linear equations \eqref{asleq} and \eqref{aslieq}, for each state ${q}\in\mathcal{Q}$, we need to determine $\bar \pi_q$ and  $\bold{A}_l$ and   $\bold{\hat A}_l$ for each incoming transition ${l\in\mathcal{L}'_q}$, carried out next.

\subsubsection{Calculation of $\bar\pi_q~\forall q\in\mathcal{Q}$}
The Markov chain for the discrete state $q(t)$  is shown in Fig. \ref{Preemptive pol}. Using \eqref{prob}, \eqref{prob1}, and the transition rates among the different states illustrated in Fig.~\ref{Preemptive pol}, it can be shown that the stationary probability vector  $\bar{\boldsymbol{\pi}}$ satisfies the following equations
 \begin{align}\label{eqlineforprob}
 &\lambda\bar{\pi}_0=\alpha\bar{\pi}_3+\alpha\bar{\pi}_6,\\&\nonumber
 (\lambda+\mu)\bar{\pi}_1=\lambda_1\bar{\pi}_0+ \lambda_1\bar{\pi}_1+\lambda_1\bar{\pi}_2+ \alpha\bar{\pi}_4+\alpha\bar{\pi}_7,\\&\nonumber
 (\lambda_1+\mu)\bar{\pi}_2=\lambda_2\bar{\pi}_0 +\lambda_2\bar{\pi}_1+\lambda_2\bar{\pi}_2+ \alpha\bar{\pi}_5+\alpha\bar{\pi}_8,\\&\nonumber
 (\lambda+\alpha)\bar{\pi}_3= \mu\bar{\pi}_1 + \mu\bar{\pi}_4+\mu\bar{\pi}_7,\\&\nonumber
 (\lambda+\mu+\alpha)\bar{\pi}_4=\lambda_1\bar{\pi}_3+\lambda_1\bar{\pi}_4+\lambda_1\bar{\pi}_5,\\&\nonumber
 (\lambda_1+\mu+\alpha)\bar{\pi}_5=\lambda_2\bar{\pi}_3+\lambda_2\bar{\pi}_4+\lambda_2\bar{\pi}_5,\\&\nonumber
 (\lambda+\alpha)\bar{\pi}_6=\mu\bar{\pi}_2+\mu\bar{\pi}_5+\mu\bar{\pi}_8,\\&\nonumber
 (\lambda+\mu+\alpha)\bar{\pi}_7=\lambda_1\bar{\pi}_6+\lambda_1\bar{\pi}_7+\lambda_1\bar{\pi}_8,\\&\nonumber
 (\lambda_1+\mu+\alpha)\bar{\pi}_8=\lambda_2\bar{\pi}_6+\lambda_2\bar{\pi}_7+\lambda_2\bar{\pi}_8,\\&\nonumber
 \textstyle\sum_{q\in\mathcal{Q}}\bar{{\pi}}_q=1.
 \end{align}
 Solving \eqref{eqlineforprob}, the stationary probabilities are given as
 \begin{align}\nonumber
 &\bar{\pi}_0=\! \dfrac{\alpha\mu}{\bar\Psi},~
 \bar{\pi}_1\!=\!\dfrac{\alpha\lambda_1(\alpha+\mu+\lambda)}{(\mu+\alpha)\bar\Psi},~
 \bar{\pi}_2\!=\!\dfrac{\alpha\lambda_2(\alpha+\mu+\lambda)}{(\mu+\alpha)\bar\Psi},\\&\label{proeqq1}
 \bar{\pi}_3\!=\!\dfrac{\lambda_1\mu}{\bar\Psi},~
 \bar{\pi}_4\!=\!\dfrac{\lambda_1^2\mu}{(\mu+\alpha)\bar\Psi},~~~~~~~~
 \bar{\pi}_5\!=\!\dfrac{\lambda_1\lambda_2\mu}{(\mu+\alpha)\bar\Psi},\\&\nonumber
 \bar{\pi}_6=\dfrac{\lambda_2\mu}{\bar\Psi},~
 \bar{\pi}_7\!=\!\dfrac{\lambda_1\lambda_2\mu}{(\mu+\alpha)\bar\Psi},~~~~~~~
 \bar{\pi}_8\!=\!\dfrac{\lambda_2^2\mu}{(\mu+\alpha)\bar\Psi},
 \end{align}
 where $\bar\Psi=(\lambda+\alpha)(\lambda+\mu)$.

 \begin{figure}
 	\centering
 	\includegraphics[width=0.85\linewidth,trim = 0mm 0mm 0mm 0mm,clip]{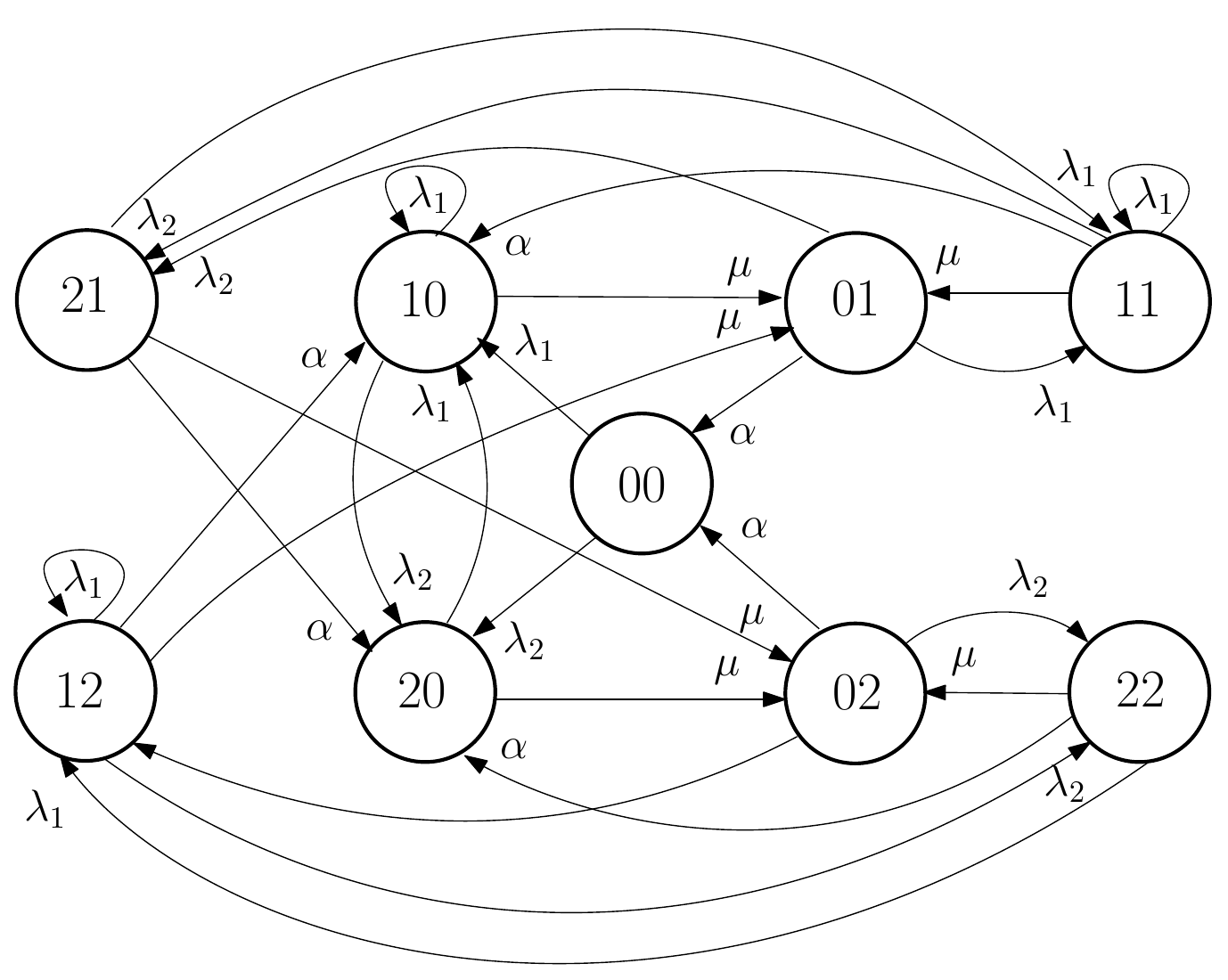}
 	\caption{The SHS Markov chain for the preemptive policy. }  
 	\label{Preemptive pol}
 \end{figure}

\subsubsection{Transition reset map matrices $\bold{A}_l$ and   $\bold{\hat A}_l$}
 The transitions and their effects on the continuous state $\bold{x}(t)$ are summarized in Table \ref{Preemptive T2}.   In the following, we explain the transitions in detail.
 \begin{table}
 	\centering\small
 	\caption{Table of transitions for the Markov chain in Fig. \ref{Preemptive pol}}
 	\label{Preemptive T2}
 	\begin{tabular}{ |l|l|c|c|c|c|}
 		\hline
 		\textit{l}  & $q_l \rightarrow q'_l $&$\lambda^{(l)}$& $\bold{x}\bold{A}_l$&$\bold{A}_l$&$\bold{\hat A}_l$\\			
 		\hline			
 		1&$00 \rightarrow  10$& $\lambda_1$&$\left[x_0 ~ x_1 ~ 0 \right]$&$\tiny\begin{bmatrix}
 		1 & 0&0 \\
 		0 & 1&0 \\
 		0 & 0&0 
 		\end{bmatrix}$&$\tiny\begin{bmatrix}
 		0 & 0&0 \\
 		0 & 0&0 \\
 		0 & 0&1 
 		\end{bmatrix}$
 		\\		
 		\hline
 		2&$00 \rightarrow  20$& $\lambda_2$&$\left[x_0 ~ x_1 ~ x_1 \right]$&$\tiny\begin{bmatrix}
 		1 & 0&0 \\
 		0 & 1&1 \\
 		0 & 0&0 
 		\end{bmatrix}$&$\tiny\begin{bmatrix}
 		0 & 0&0 \\
 		0 & 0&0 \\
 		0 & 0&0 
 		\end{bmatrix}$
 		\\		
 		\hline
 		3&$10 \rightarrow  10$& $\lambda_1$&$\left[x_0 ~ x_1 ~0 \right]$&$\tiny\begin{bmatrix}
 		1 & 0&0 \\
 		0 & 1&0 \\
 		0 & 0&0 
 		\end{bmatrix}$&$\tiny\begin{bmatrix}
 		0 & 0&0 \\
 		0 & 0&0 \\
 		0 & 0&1 
 		\end{bmatrix}$
 		\\		
 		\hline
 		4&$10 \rightarrow  20$& $\lambda_2$&$\left[x_0 ~ x_1~x_1 \right]$&$\tiny\begin{bmatrix}
 		1 & 0&0 \\
 		0 & 1&1 \\
 		0 & 0&0 
 		\end{bmatrix}$&$\tiny\begin{bmatrix}
 		0 & 0&0 \\
 		0 & 0&0 \\
 		0 & 0&0 
 		\end{bmatrix}$
 		\\		
 		\hline
 		5&$10 \rightarrow  01$& $\mu$&$\left[x_0 ~ x_2 ~ x_2 \right]$&$\tiny\begin{bmatrix}
 		1 & 0&0 \\
 		0 & 0&0 \\
 		0 & 1&1 
 		\end{bmatrix}$&$\tiny\begin{bmatrix}
 		0 & 0&0 \\
 		0 & 0&0 \\
 		0 & 0&0 
 		\end{bmatrix}$
 		\\		
 		\hline
 		6&$20 \rightarrow  10$& $\lambda_1$&$\left[x_0 ~ x_1 ~ 0\right]$&$\tiny\begin{bmatrix}
 		1 & 0&0 \\
 		0 & 1&0 \\
 		0 & 0&0 
 		\end{bmatrix}$&$\tiny\begin{bmatrix}
 		0 & 0&0 \\
 		0 & 0&0 \\
 		0 & 0&1 
 		\end{bmatrix}$
 		\\		
 		\hline
 		7&$20 \rightarrow  02$& $\mu$&$\left[x_0 ~ x_0 ~ x_2 \right]$&$\tiny\begin{bmatrix}
 		1 & 1&0 \\
 		0 & 0&0 \\
 		0 & 0&1 
 		\end{bmatrix}$&$\tiny\begin{bmatrix}
 		0 & 0&0 \\
 		0 & 0&0 \\
 		0 & 0&0 
 		\end{bmatrix}$
 		\\		
 		\hline
 		8&$01 \rightarrow  11$& $\lambda_1$&$\left[x_0 ~ x_1 ~0 \right]$&$\tiny\begin{bmatrix}
 		1 & 0&0 \\
 		0 & 1&0 \\
 		0 & 0&0 
 		\end{bmatrix}$&$\tiny\begin{bmatrix}
 		0 & 0&0 \\
 		0 & 0&0 \\
 		0 & 0&1 
 		\end{bmatrix}$
 		\\		
 		\hline
 		9&$01 \rightarrow  21$& $\lambda_2$&$\left[x_0 ~ x_1 ~x_1  \right]$&$\tiny\begin{bmatrix}
 		1 & 0&0 \\
 		0 & 1&1 \\
 		0 & 0&0 
 		\end{bmatrix}$&$\tiny\begin{bmatrix}
 		0 & 0&0 \\
 		0 & 0&0 \\
 		0 & 0&0 
 		\end{bmatrix}$
 		\\		
 		\hline
 		10&$01 \rightarrow  00$& $\alpha$&$\left[x_1 ~ x_1 ~ x_2 \right]$&$\tiny\begin{bmatrix}
 		1 & 1&0 \\
 		0 & 0&0 \\
 		0 & 0&1 
 		\end{bmatrix}$&$\tiny\begin{bmatrix}
 		0 & 0&0 \\
 		0 & 0&0 \\
 		0 & 0&0 
 		\end{bmatrix}$
 		\\		
 		\hline
 		11&$11 \rightarrow  11$& $\lambda_1$&$\left[x_0 ~ x_1 ~ 0  \right]$&$\tiny\begin{bmatrix}
 		1 & 0&0 \\
 		0 & 1&0 \\
 		0 & 0&0 
 		\end{bmatrix}$&$\tiny\begin{bmatrix}
 		0 & 0&0 \\
 		0 & 0&0 \\
 		0 & 0&1 
 		\end{bmatrix}$
 		\\		
 		\hline
 		12&$11 \rightarrow  21$& $\lambda_2$&$\left[x_0 ~ x_1 ~ x_1  \right]$&$\tiny\begin{bmatrix}
 		1 & 0&0 \\
 		0 & 1&1 \\
 		0 & 0&0 
 		\end{bmatrix}$&$\tiny\begin{bmatrix}
 		0 & 0&0 \\
 		0 & 0&0 \\
 		0 & 0&0 
 		\end{bmatrix}$
 		\\		
 		\hline
 		13&$11 \rightarrow  01$& $\mu$&$\left[x_0 ~ x_2 ~ x_2  \right]$&$\tiny\begin{bmatrix}
 		1 & 0&0 \\
 		0 & 0&0 \\
 		0 & 1&1 
 		\end{bmatrix}$&$\tiny\begin{bmatrix}
 		0 & 0&0 \\
 		0 & 0&0 \\
 		0 & 0&0 
 		\end{bmatrix}$
 		\\		
 		\hline
 		14&$11 \rightarrow  10$& $\alpha$&$\left[x_1 ~ x_1 ~x_2  \right]$&$\tiny\begin{bmatrix}
 		0 & 0&0 \\
 		1 & 1&0 \\
 		0 & 0&1 
 		\end{bmatrix}$&$\tiny\begin{bmatrix}
 		0 & 0&0 \\
 		0 & 0&0 \\
 		0 & 0&0 
 		\end{bmatrix}$
 		\\		
 		\hline
 		15&$21 \rightarrow  11$& $\lambda_1$&$\left[x_0 ~ x_1 ~ 0 \right]$&$\tiny\begin{bmatrix}
 		1 & 0&0 \\
 		0 & 1&0 \\
 		0 & 0&0 
 		\end{bmatrix}$&$\tiny\begin{bmatrix}
 		0 & 0&0 \\
 		0 & 0&0 \\
 		0 & 0&1 
 		\end{bmatrix}$
 		\\		
 		\hline
 		16&$21 \rightarrow  02$& $\mu$&$\left[x_0 ~ x_0 ~ x_2  \right]$&$\tiny\begin{bmatrix}
 		1 & 1&0 \\
 		0 & 0&0 \\
 		0 & 0&1 
 		\end{bmatrix}$&$\tiny\begin{bmatrix}
 		0 & 0&0 \\
 		0 & 0&0 \\
 		0 & 0&0 
 		\end{bmatrix}$
 		\\		
 		\hline
 		17&$21 \rightarrow  20$& $\alpha$&$\left[x_1 ~ x_1 ~ x_1 \right]$&$\tiny\begin{bmatrix}
 		0 & 0&0 \\
 		1 & 1&1 \\
 		0 & 0&0 
 		\end{bmatrix}$&$\tiny\begin{bmatrix}
 		0 & 0&0 \\
 		0 & 0&0 \\
 		0 & 0&0 
 		\end{bmatrix}$
 		\\		
 		\hline
 		18&$02 \rightarrow  12$& $\lambda_1$&$\left[x_0 ~ x_0 ~0 \right]$&$\tiny\begin{bmatrix}
 		1 & 1&0 \\
 		0 & 0&0 \\
 		0 & 0&0 
 		\end{bmatrix}$&$\tiny\begin{bmatrix}
 		0 & 0&0 \\
 		0 & 0&0 \\
 		0 & 0&1 
 		\end{bmatrix}$
 		\\		
 		\hline
 		19&$02 \rightarrow  22$& $\lambda_2$&$\left[x_0 ~ x_0 ~ x_0  \right]$&$\tiny\begin{bmatrix}
 		1 & 1&1 \\
 		0 & 0&0 \\
 		0 & 0&0 
 		\end{bmatrix}$&$\tiny\begin{bmatrix}
 		0 & 0&0 \\
 		0 & 0&0 \\
 		0 & 0&0 
 		\end{bmatrix}$
 		\\		
 		\hline
 		20&$02 \rightarrow  00$& $\alpha$&$\left[x_0 ~ x_1 ~ x_2 \right]$&$\tiny\begin{bmatrix}
 		1 & 0&0 \\
 		0 & 1&0 \\
 		0 & 0&1 
 		\end{bmatrix}$&$\tiny\begin{bmatrix}
 		0 & 0&0 \\
 		0 & 0&0 \\
 		0 & 0&0 
 		\end{bmatrix}$
 		\\		
 		\hline
 		21&$12 \rightarrow  12$& $\lambda_1$&$\left[x_0 ~ x_0 ~ 0  \right]$&$\tiny\begin{bmatrix}
 		1 & 1&0 \\
 		0 & 0&0 \\
 		0 & 0&0 
 		\end{bmatrix}$&$\tiny\begin{bmatrix}
 		0 & 0&0 \\
 		0 & 0&0 \\
 		0 & 0&1 
 		\end{bmatrix}$
 		\\		
 		\hline
 		22&$12 \rightarrow  22$& $\lambda_2$&$\left[x_0 ~ x_0 ~ x_0  \right]$&$\tiny\begin{bmatrix}
 		1 & 1&1 \\
 		0 & 0&0 \\
 		0 & 0&0 
 		\end{bmatrix}$&$\tiny\begin{bmatrix}
 		0 & 0&0 \\
 		0 & 0&0 \\
 		0 & 0&0 
 		\end{bmatrix}$
 		\\		
 		\hline
 		23&$12 \rightarrow  01$& $\mu$&$\left[x_0 ~ x_2 ~ x_2 \right]$&$\tiny\begin{bmatrix}
 		1 & 0&0 \\
 		0 & 0&0 \\
 		0 & 1&1 
 		\end{bmatrix}$&$\tiny\begin{bmatrix}
 		0 & 0&0 \\
 		0 & 0&0 \\
 		0 & 0&0 
 		\end{bmatrix}$
 		\\		
 		\hline
 		24&$12 \rightarrow  10$& $\alpha$&$\left[x_0 ~ x_1 ~ x_2  \right]$&$\tiny\begin{bmatrix}
 		1 & 0&0 \\
 		0 & 1&0 \\
 		0 & 0&1 
 		\end{bmatrix}$&$\tiny\begin{bmatrix}
 		0 & 0&0 \\
 		0 & 0&0 \\
 		0 & 0&0 
 		\end{bmatrix}$	\\		
 		\hline
 			25&$22 \rightarrow  12$& $\lambda_1$&$\left[x_0 ~ x_0 ~ 0  \right]$&$\tiny\begin{bmatrix}
 		1 & 1&0 \\
 		0 & 0&0 \\
 		0 & 0&0 
 		\end{bmatrix}$&$\tiny\begin{bmatrix}
 		0 & 0&0 \\
 		0 & 0&0 \\
 		0 & 0&1 
 		\end{bmatrix}$	\\		
 		\hline
 			26&$22 \rightarrow  02$& $\mu$&$\left[x_0 ~ x_0 ~ x_2  \right]$&$\tiny\begin{bmatrix}
 		1 & 1&0 \\
 		0 & 0&0 \\
 		0 & 0&1 
 		\end{bmatrix}$&$\tiny\begin{bmatrix}
 		0 & 0&0 \\
 		0 & 0&0 \\
 		0 & 0&0 
 		\end{bmatrix}$	\\		
 		\hline
 			27&$22 \rightarrow  20$& $\alpha$&$\left[x_0 ~ x_1 ~ x_1  \right]$&$\tiny\begin{bmatrix}
 		1 & 0&0 \\
 		0 & 1&1 \\
 		0 & 0&0 
 		\end{bmatrix}$&$\tiny\begin{bmatrix}
 		0 & 0&0 \\
 		0 & 0&0 \\
 		0 & 0&0 
 		\end{bmatrix}$	\\		
 		\hline
 	\end{tabular}
 \end{table}

 \begin{itemize}
 	\item \textit{l=1}: A source 1 packet  arrives at  the transmitter server. With this transition the AoI of source 1 does not change, i.e., $x'_0=x_0$. This is because the arrival of source 1 packet does not yield an age reduction until it is delivered to the sink. Since the sink server is
 	empty, $x_1$ becomes irrelevant to the AoI of source 1, and thus, it does not matter
 	what is assigned to $x'_1$. In this paper, when the system moves into a new state where $x_i$ is
 	irrelevant, we set $x'_i=x_i,~i\in\{1,2\}$. Thus, for the transition $l=1$, we have $x_1'=x_1$. Since the arriving source 1 packet is fresh and its age is zero, we have $x'_2=0$.  
 	Finally, we have 
 	\begin{align}\label{l1}
 	\bold{x}'=[x_0~x_1~x_2]\bold{A}_1=[x_0 ~ x_1 ~ 0 ].
 	\end{align} 
 	According to \eqref{l1}, it can be shown that the binary transition reset map matrix $\bold{A}_1$ is given by
 	$
 	\small	\bold{A}_1=\begin{bmatrix}
 	1 & 0&0 \\
 	0 & 1&0 \\
 	0 & 0&0 
 	\end{bmatrix}.
 	$
 	Then, by using \eqref{Ahat}, $\bold{\hat A}_1$ is given as
 	$
 	\small	\bold{\hat A}_1=\begin{bmatrix}
 	0 & 0&0 \\
 	0 & 0&0 \\
 	0 & 0&1 
 	\end{bmatrix}.
 	$
 	As it can be seen, when we have $\bold{x}'$ for a transition $l\in\mathcal{L}$, it is easy to calculate $\bold{A}_l$ and  $\bold{\hat A}_l$.
 	Thus, for the rest of the transitions, we just explain the calculation of $\bold{x}'$ and present the final expressions of $\bold{A}_l$ and  $\bold{\hat A}_l$.

 	\item \textit{l=2}: A source 2 packet arrives at the transmitter server. Similarly as for transition $l=1$, we have  $x'_0=x_0$ and $x'_1=x_1$.  Since the arriving packet is a source 2
 	packet, its delivery does not change the AoI of source 1,
 	and thus, we have  $x'_2=x_1$.  
 	
 	\item \textit{l=3}: A source 1 packet is in the  transmitter server, the sink server is idle, and a source 1 packet arrives. According to the preemptive policy, the source 1 packet that is under service is preempted by the arriving packet. Similarly as for transition $l=1$, we have  $x'_0=x_0$ and $x'_1=x_1$. Since the arriving source 1 packet is fresh and its age is zero, we have $x'_2=0$. The reset maps of transitions $l=11$ and  $l=21$  can be derived similarly.
 	
 	\item \textit{l=4}:
 	 A source 1 packet is in the  transmitter server, the sink server is idle, and a source 2 packet arrives. According to the preemptive policy, the source 1 packet that is under service is preempted by the arriving packet. Similarly as for transition $l=1$, we have  $x'_0=x_0$ and $x'_1=x_1$. Since the arriving packet is a source 2 packet, its delivery does not change the AoI of source 1,
 	 and thus we have  $x'_2=x_1$. The reset maps of transitions $l=12$ and $l=22$ can be derived similarly.
 	
 	\item \textit{l=5}:	The sink server is idle, and the source 1 packet in the transmitter server completes service and moves to the sink server. With this transition, we have $x'_0=x_0$ because this transition does not change the AoI at the sink. Since the source 1 packet moves to the sink server and its delivery would reduce the AoI to $x_2$, we have $x'_1=x_2$. Since the transmitter server is	empty, $x_2$ becomes irrelevant, and thus, we have $x'_2=x_2$. The reset maps of transitions $l=13$ and $l=23$ can be derived similarly.
 	
 	\item \textit{l=6}: A source 2 packet is in the  transmitter server, the sink server is idle, and a source 1 packet arrives. According to the preemptive policy,  the source 2 packet that is under service is preempted by the arriving packet. Similarly as for transition $l=1$, we have $x'_0=x_0$ and $x'_1=x_1$. Since the arriving source 1 packet is fresh and its age is zero, we have $x'_2=0$. The reset maps of transitions $l=15$  and  $l=25$ can be derived similarly.
 	
 	\item \textit{l=7}:	The sink server is idle and the source 2 packet in the transmitter server completes service and moves to the sink server. With this transition, we have $x'_0=x_0$, and since the source 2 packet delivery does not change the AoI of source 1, we have $x'_1=x_0$. Since the transmitter server is empty, $x_2$ becomes irrelevant, and thus, we have $x'_2=x_2$. The reset maps of transitions $l=16$ and $l=26$ can be derived similarly.
 	
 	\item \textit{l=8}: The transmitter server is idle, a source 1 packet is in the sink server, and a source 1 packet arrives. With this transition, we have $x'_0=x_0$. The delivery of the packet in the sink server would reduce the AoI to $x_1$ thus, we have $x'_1=x_1$. Since the arriving source 1 packet is fresh and its age is zero, we have $x'_2=0$.	The reset maps of transition $ l = 18 $ can be	derived similarly.

 	\item \textit{l=9}: The transmitter server is idle, a source 1 packet is in the sink server, and a source 2 packet arrives. Similarly as for transition $l=8$, we have  $x'_0=x_0$ and $x'_1=x_1$. Since the arriving packet is a source 2 packet, its delivery does not change the AoI of source 1, and thus we have $x'_2=x_1$.	The reset maps of transition $ l = 19 $ can be	derived similarly.

 	\item \textit{l=10}: The transmitter server is idle and the source 1 packet in the sink server completes service and is delivered to the sink. With this transition, we have ${x'_0=x_1}$. Since  the servers become idle we have ${x'_1=x_1}$ and ${x'_2=x_2}$. The reset maps of transitions
 	$l\in\{14,17,20,24,27\}$ can be derived similarly.

 \end{itemize}

 \subsubsection{ Calculation of the MGF of the AoI}
 Recall from Section \ref{The SHS Technique to Calculate MGF} that to calculate MGF of the AoI, we need to first ensure whether we can find  non-negative vectors ${\bar{\bold{v}}_q=[\bar{v}_{q0} \cdots \bar{v}_{qn} ], \forall q\in \mathcal{Q}},$ that satisfy \eqref{asleq}. Having derived  the matrices $\bold{A}_l$ for all transitions (see Table~\ref{Preemptive T2}), we can form the system of linear equations in  \eqref{asleq}. 
 It can be shown that the system of linear equations in \eqref{asleq} has a non-negative solution. Consequently, using  the derived $\bold{A}_l$ and $\bold{\hat A}_l$ for the transitions in Table~\ref{Preemptive T2}, we can form the system of linear equations in  \eqref{aslieq}. Finally, by solving the formed system of linear equations,  the values of $\bar{v}^s_{q0},~ \forall q\in\mathcal{Q}$, are
 calculated as presented in Appendix~\ref{Valuesof  mgf preemptive}. Substituting  ${\bar v}^s_{q0}, \forall q\in\mathcal{Q}$ into \eqref{MGFage} results  the MGF of the AoI of source 1 under the preemptive policy, given in Theorem~\ref{cMGFtheo1}.

\subsection{Blocking Policy}\label{Non-Preemptive Policy}

The elements of discrete state space $ \mathcal{Q} $ are  $0=\text{II},~1=\text{IB},~2=\text{BI},~3=\text{BB}$, where ``I'' stands for ``idle'' and ``B'' stands for ``busy''. State $0=\text{II}$ indicates that both the transmitter and sink servers are idle; $1=\text{IB}$ indicates that the transmitter server is idle and there is a packet (regardless of the source) under service in the sink server; $2=\text{BI}$ indicates that there is a packet under service in the transmitter server and the sink server is idle; and  $3=\text{BB}$ indicates that both the transmitter and sink servers are busy. The  continuous state space is  the same as that for the  preemptive policy. 

Next, we determine $\bar \pi_q, ~\forall q\in\mathcal{Q}$, $\bold{A}_l$, and   $\bold{\hat A}_l$ for each incoming transition ${l\in\mathcal{L}'_q}$ which are needed to form the system of linear equations  \eqref{asleq} and \eqref{aslieq}.

\subsubsection{Calculation of $\bar\pi_q~\forall q\in\mathcal{Q}$}
The Markov chain for the discrete state $q(t)$  is shown in Fig. \ref{non-preemptive makov}. Using \eqref{prob} and the transition rates among the different states illustrated in Fig.~\ref{non-preemptive makov}, it can be shown that the stationary probability vector  $\bar{\boldsymbol{\pi}}$ satisfies $\bar{\boldsymbol{\pi}}\bold{D}=\bar{\boldsymbol{\pi}}\bold{Q}$ with $\bold{D}=\text{diag}[\lambda,\lambda_1+\mu,\lambda+\alpha,\lambda_1+\mu+\alpha]$ and
\begin{align}\nonumber
\bold{Q}=\left[\begin{array}{cccc}
0      & \lambda   & 0  &0 \\
0      & \lambda_1 & \mu&0 \\
\alpha & 0         & 0  &\lambda \\
0      & \alpha    & \mu&\lambda_1 \\
\end{array}\right]\!.
\end{align}
Using the above $
\bar{\boldsymbol{\pi}}\bold{D}=\bar{\boldsymbol{\pi}}\bold{Q} 
$  and  $\textstyle\sum_{q\in\mathcal{Q}}\bar{{\pi}}_q=1$ in \eqref{prob1}, the stationary probabilities are given as
\begin{align}\label{proeqq0}
\bar{\boldsymbol{\pi}}\!=\!\dfrac{1}{(\lambda\!+\!\mu)(\lambda\!+\!\alpha)}\left[\alpha\mu, \dfrac{\alpha\lambda(\alpha\!+\!\lambda\!+\!\mu)}{\alpha+\mu}, \mu\lambda,\dfrac{\mu\lambda^2}{\alpha\!+\!\mu}\right].
\end{align}

\subsubsection{Transition reset map matrices $\bold{A}_l$ and   $\bold{\hat A}_l$}
The transitions, their effects on the continuous state $\bold{x}(t)$, and the values of $\bold{A}_l$ and   $\bold{\hat A}_l$  are summarized in Table~\ref{non-preemptive}. Following the arguments of Section~\ref{Source-Agnostic Preemptive Policy}, we simply present the final values of  $\bold{A}_l$ and   $\bold{\hat A}_l$   in Table~\ref{non-preemptive}. 

\begin{figure}
	\centering
	\includegraphics[width=0.4\linewidth,trim = 0mm 0mm 0mm 0mm,clip]{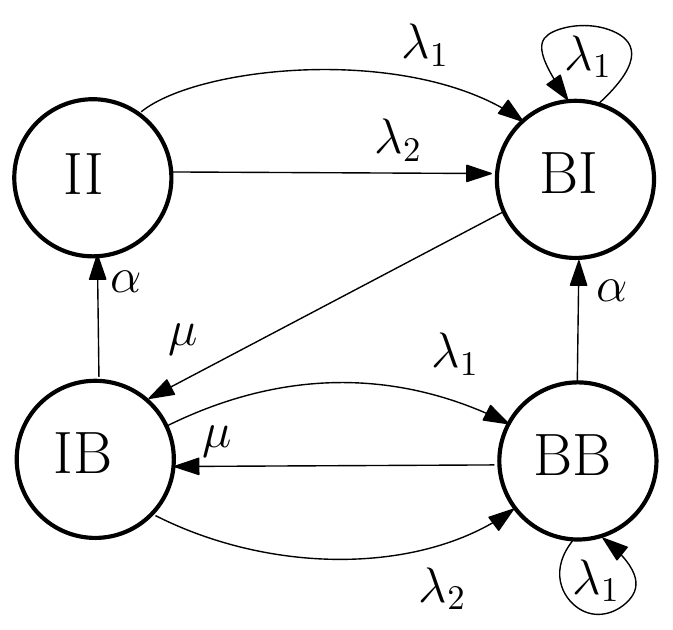}
	\caption{The SHS Markov chain for the blocking policy. }  
	\label{non-preemptive makov}
\end{figure}

\begin{table}
	\centering\small
	\caption{Table of transitions for the Markov chain in Fig. \ref{non-preemptive makov}}
	\label{non-preemptive}
	\begin{tabular}{ |l|l|c|c|c|c|}
		\hline
		\textit{l}  & $q_l \rightarrow q'_l $&$\lambda^{(l)}$& $\bold{x}\bold{A}_l$&$\bold{A}_l$& $\bold{\hat A}_l$ \\			
		\hline			
		1&$\text{II} \rightarrow  \text{BI}\!\!$& $\lambda_1\!\!$&$\left[x_0 ~ x_1 ~ 0 \right]\!\!$&$\tiny\begin{bmatrix}
		1 & 0&0 \\
		0 & 1&0 \\
		0 & 0&0 
		\end{bmatrix}$&$\tiny\begin{bmatrix}
		0 & 0&0 \\
		0 & 0&0 \\
		0 & 0&1 
		\end{bmatrix}$
		\\		
		\hline
		2&$\text{II}\rightarrow \text{BI}\!\!$& $\lambda_2\!\!$&$\left[x_0 ~ x_1 ~ x_1 \right]\!\!$&$\tiny\begin{bmatrix}
		1 & 0&0 \\
		0 & 1&1 \\
		0 & 0&0 
		\end{bmatrix}$&$\tiny\begin{bmatrix}
		0 & 0&0 \\
		0 & 0&0 \\
		0 & 0&0 
		\end{bmatrix}$
		\\		
		\hline
		3&$\text{BI} \rightarrow  \text{BI}\!\!$& $\lambda_1\!\!$&$\left[x_0 ~ x_1 ~x_2 \right]\!\!$&$\tiny\begin{bmatrix}
		1 & 0&0 \\
		0 & 1&0 \\
		0 & 0&1 
		\end{bmatrix}$&$\tiny\begin{bmatrix}
		0 & 0&0 \\
		0 & 0&0 \\
		0 & 0&0 
		\end{bmatrix}$
		\\		
		\hline
		4&$\text{BI}\rightarrow  \text{IB}\!\!$& $\mu\!\!$&$\left[x_0 ~ x_2 ~ x_2  \right]\!\!$&$\tiny\begin{bmatrix}
		1 & 0&0 \\
		0 & 0&0 \\
		0 & 1&1
		\end{bmatrix}$&$\tiny\begin{bmatrix}
		0 & 0&0 \\
		0 & 0&0 \\
		0 & 0&0 
		\end{bmatrix}$
		\\		
		\hline
		5&$\text{IB} \rightarrow  \text{II}\!\!$& $\alpha\!\!$&$\left[x_1 ~ x_1 ~ x_2 \right]\!\!$&$\tiny\begin{bmatrix}
		0 & 0&0 \\
		1 & 1&0 \\
		0 & 0&1 
		\end{bmatrix}$&$\tiny\begin{bmatrix}
		0 & 0&0 \\
		0 & 0&0 \\
		0 & 0&0 
		\end{bmatrix}$
		\\		
		\hline
		6&$\text{IB} \rightarrow  \text{BB}\!\!$& $\lambda_1\!\!$&$\left[x_0 ~ x_1 ~ 0 \right]\!\!$&$\tiny\begin{bmatrix}
		1 & 0&0 \\
		0 & 1&0 \\
		0 & 0&0 
		\end{bmatrix}$&$\tiny\begin{bmatrix}
		0 & 0&0 \\
		0 & 0&0 \\
		0 & 0&1 
		\end{bmatrix}$
		\\		
		\hline
		7&$\text{IB} \rightarrow  \text{BB}\!\!$& $\lambda_2\!\!$&$\left[x_0 ~ x_1 ~ x_1 \right]\!\!$&$\tiny\begin{bmatrix}
		1 & 0&0 \\
		0 & 1&1 \\
		0 & 0&0 
		\end{bmatrix}$&$\tiny\begin{bmatrix}
		0 & 0&0 \\
		0 & 0&0 \\
		0 & 0&0 
		\end{bmatrix}$
		\\		
		\hline
		8&$\text{BB} \rightarrow  \text{BB}\!\!$& $\lambda_1\!\!$&$\left[x_0 ~ x_1 ~x_2\right]\!\!$&$\tiny\begin{bmatrix}
		1 & 0&0 \\
		0 & 1&0 \\
		0 & 0&1 
		\end{bmatrix}$&$\tiny\begin{bmatrix}
		0 & 0&0 \\
		0 & 0&0 \\
		0 & 0&0 
		\end{bmatrix}$
		\\		
		\hline
		9&$\text{BB} \rightarrow  \text{IB}\!\!$& $\mu\!\!$&$\left[x_0 ~ x_1 ~x_2 \right]\!\!$&$\tiny\begin{bmatrix}
		1 & 0&0 \\
		0 & 0&0 \\
		0 & 1&1
		\end{bmatrix}$&$\tiny\begin{bmatrix}
		0 & 0&0 \\
		0 & 0&0 \\
		0 & 0&0 
		\end{bmatrix}$
		\\		
		\hline
		10&$\text{BB} \rightarrow  \text{BI}\!\!$& $\alpha\!\!$&$\left[x_1 ~ x_1 ~ x_2 \right]\!\!$&$\tiny\begin{bmatrix}
		0 & 0&0 \\
		1 & 1&0 \\
		0 & 0&1 
		\end{bmatrix}$&$\tiny\begin{bmatrix}
		0 & 0&0 \\
		0 & 0&0 \\
		0 & 0&0 
		\end{bmatrix}$
		\\		
		\hline
	\end{tabular}
		\vspace{-5mm}
\end{table}

\begin{itemize}
	\item \textit{l=1}: A source 1 packet arrives at the transmitter server. We have  $x'_0=x_0$, because there is no departure at the sink server. Since the sink server is
	empty, $x_1$ becomes irrelevant to the AoI of source 1, and thus, we have $x_1'=x_1$. Since the arriving source 1 packet is fresh and its age is zero, we have $x'_2=0$.  The reset maps of transition $ l = 6 $ can be
	derived similarly.

	\item \textit{l=2}: A source 2 packet arrives at the transmitter server. Similarly as for transition $l=1$, we have  $x'_0=x_0$ and $x'_1=x_1$.  Since the arriving packet is a source 2
	packet, its delivery does not change the AoI of source 1,
	and thus we have  $x'_2=x_1$.  The reset maps of transition $ l = 7 $ can be
	derived similarly.

	\item \textit{l=3}: A packet is in the transmitter server, the sink server is idle, and a source 1 packet arrives. According to the blocking policy the arriving  packet is blocked and cleared. Similarly as for transition $l=1$, we have  $x'_0=x_0$ and $x'_1=x_1$. The delivery of the packet in the transmitter server to the sink  would reduce the AoI to $x_2$ and thus, $x'_2=x_2$. The reset maps of transition $ l = 8 $ can be	derived similarly.

	\item \textit{l=4}:	A packet in the transmitter server completes service and moves to the sink server.
		In this transition, we have $x'_0=x_0$ because this transition does not change the AoI at the sink. Since the packet moves to the sink server, we have $x'_1=x_2$. Since the transmitter server is
		empty, $x_2$ becomes irrelevant, and thus, we have $x'_2=x_2$.  The reset maps of transition $ l = 9 $ can be
		derived similarly.

		\item \textit{l=5}: A packet in the sink server completes service and is delivered to the sink.  
		With this transition, we have $x'_0=x_1$. Since with this transition the both servers become idle we have $x'_1=x_1$ and $x'_2=x_2$.  The reset maps of transition $ l = 10 $ can be
		derived similarly.
\end{itemize}

\subsubsection{ Calculation of the MGF of the AoI}
Having derived  the matrices $\bold{A}_l$ for all transitions (see Table~\ref{non-preemptive}), we can form the system of linear equations in  \eqref{asleq}. 
It can be shown that the system of linear equations in \eqref{asleq} has a non-negative solution. Using the derived $\bold{A}_l$ and $\bold{\hat A}_l$ for the transitions in Table~\ref{non-preemptive} we can form the system of linear equations in  \eqref{aslieq}.  By solving the  system of linear equations, values of ${\bar v}^s_{q0}, \forall q\in\mathcal{Q}$, are given as
\begin{align}\label{vsnonpreemptive}
&\bar {v}^s_{00}=\dfrac{-\bar \alpha^2\rho_1(\bar \alpha+\rho+1-\bar s)}{(\bar \alpha+\rho)(\rho+1)\Psi},\\&\nonumber
 \bar {v}^s_{10}=\dfrac{-\bar \alpha^2\rho_1\rho(\bar \alpha+\rho+1-\bar s)}{(\bar \alpha+1-\bar s)(1-\bar s)(\bar \alpha+\rho)(\rho+1)\Psi},\\&\nonumber
 \bar {v}^s_{20}=\dfrac{-\bar \alpha^2\rho_1\rho(\bar \alpha+\rho+1-\bar s)}{(\bar \alpha-\bar s)(1-\bar s)(\bar \alpha+\rho)(\rho+1)\Psi},\\&\nonumber
 \bar {v}^s_{30}=\dfrac{-\bar \alpha^2\rho_1\rho^2(\bar \alpha+\rho+1-\bar s)}{(\bar \alpha-\bar s)(\bar \alpha+1-\bar s)(1-\bar s)(\bar \alpha+\rho)(\rho+1)\Psi},
\end{align}
where $\bar s=s/\mu$, $\bar \alpha=\alpha/\mu$,  and $\Psi=\bar s(1 - \bar s)(\rho - \bar s)(\rho + 1 - \bar s)+{\bar \alpha}^2\psi_1+\bar \alpha\psi_2,$ where $\psi_1$ and $\psi_2$ are given in \eqref{psi1psi}.

Substituting  ${\bar v}^s_{q0}, \forall q\in\mathcal{Q}$, in \eqref{vsnonpreemptive} into \eqref{MGFage}
results in the MGF of the AoI of source 1 under the blocking policy, given in Theorem~\ref{cmgf2}.

\section{Validations and Numerical Results}\label{Numerical Results}
By using the derived MGFs the $m$th moment of the  AoI of source 1, $\Delta_1^{(m)}$, is given by
$
\Delta_1^{(m)}=\dfrac{\mathrm{d}^m(M_{\Delta_1}(s))}{\mathrm{d}s^m}\Big|_{s=0}.
$ It can be shown that the average AoI under the preemptive policy derived by using the MGF generalizes the existing results in \cite{8406966} and \cite{8469047}. Specifically, 
when we set $\lambda_2=0,$ the average AoI follows the result in \cite{8406966}; and as $\alpha\rightarrow\infty$ or $\mu\rightarrow\infty,$ the average AoI coincides with the  average AoI expression under the LCFS-S policy introduced in \cite{8469047}.

 Fig. \ref{SD}  illustrates  the average AoI of source 1 and its standard deviation ($ \sigma $) as a function of $\lambda_1$ under the considered policies for $\mu=1$, $\alpha=1$, and $\lambda=5$. Simulated curves are shown as a verification of the derived theoretical results.  From the figure, it can be seen that the preemptive policy outperforms the blocking policy. 
 We would like to point out that 
 based on our experiments (not included here due to space limitation), this conclusion appears to hold true for any choice of the parameters. 
 In addition, it can be seen that the first moment ($\Delta_1$) and the standard deviation ($\sigma$) of the AoI increase by the same amount as $\rho_1$ decreases; thus their difference remains approximately constant.
 
 
 
 
 \begin{figure}
 	\centering
 	\includegraphics[width=1.1\linewidth,trim = 15mm 0mm 0mm 0mm,clip]{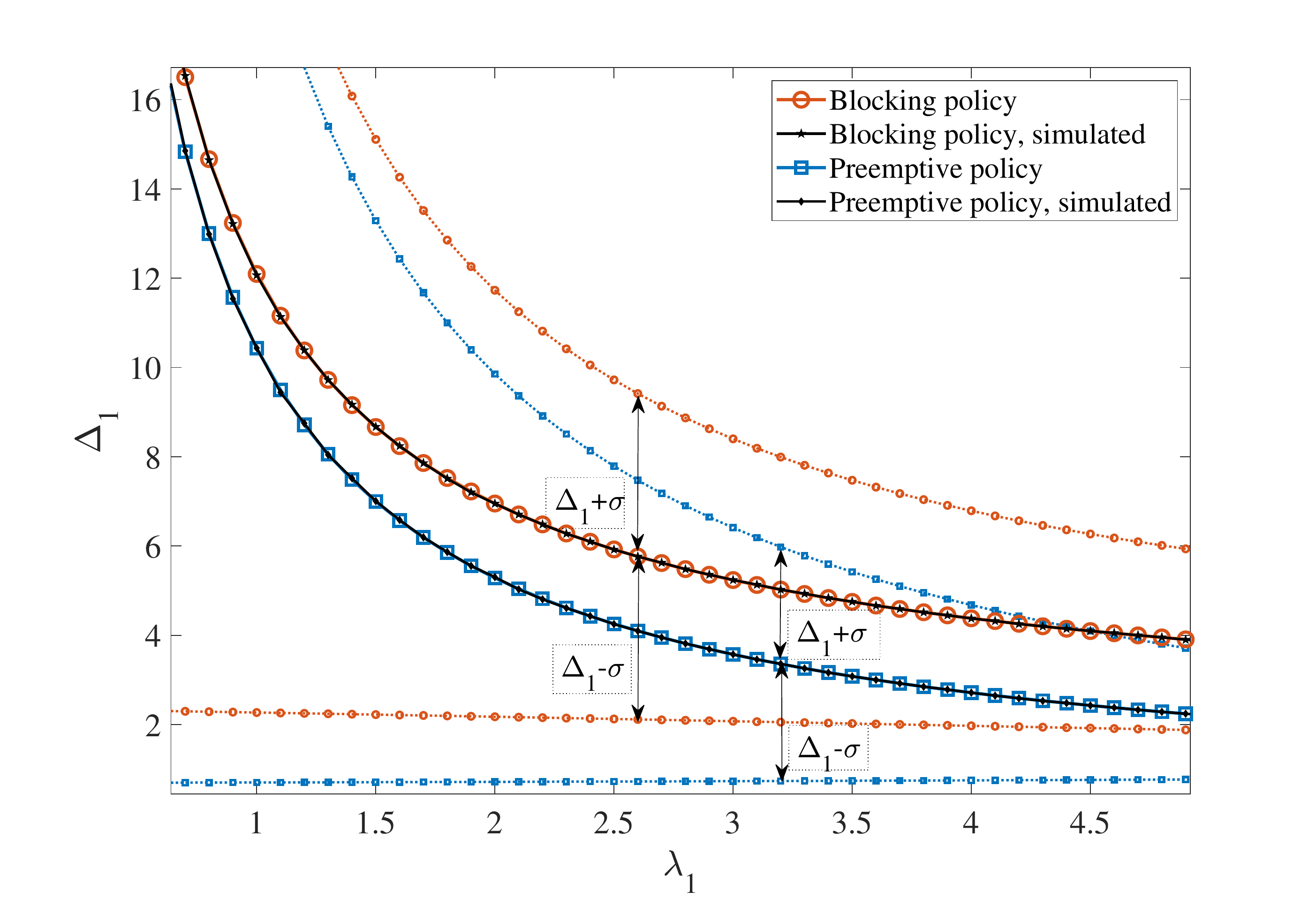}			
 	\caption{The average AoI of source 1 and its standard deviation ($\sigma$) as a function of $\rho_1$ under the two packet management policies for $\mu=1$, $\alpha=1$, and $\lambda=5$. The dashed lines visualize the standard deviation of the AoI as $\Delta_1+\sigma$ and $\Delta_1-\sigma$.}
 	\vspace{-5mm}
 	\label{SD}
 \end{figure}

\section{Conclusions}\label{Conclusions}
We considered a multi-source  system with computation-intensive status updates and derived the MGF of the AoI under two packet management policies using the SHS technique. Using the derived MGFs we evaluated  the average AoI  and its standard deviation. 
The results showed that the preemptive policy provides better performance than that of the blocking policy.

\appendices
\section{Values of $\bar{v}^s_{q0}$ for  the Preemptive Policy} \label{Valuesof  mgf preemptive}
\begin{align}\nonumber
&\bar{v}^s_{00}=\\&\nonumber\dfrac{\rho_1\bar\alpha(\bar s- \bar\alpha)(1-\bar s)(\bar\alpha+\rho+1-\bar s)}{(\bar\alpha\!+\!\rho\!-\!\bar s)(1\!+\!\rho\!-\!\bar s)(\bar s(1\! -\! \bar s)(\rho\! -\! \bar s)(\rho\! +\! 1 \!-\! \bar s)\!+\!{\bar \alpha}^2\psi_1\!+\!\bar \alpha\psi_2)},
\\&\nonumber
\bar{v}^s_{10}=\dfrac{\bar{v}^s_{00}\rho_1\hat\Psi}{(\bar\alpha + \rho - \bar s)(1-\bar s)(\bar\alpha+\rho+1-\bar s)},
\\&\nonumber
\bar{v}^s_{20}=\dfrac{\bar{v}^s_{00}\rho_2\hat\Psi}{(\bar\alpha + \rho - \bar s)(1-\bar s)(\bar\alpha+\rho+1-\bar s)},
\\&\nonumber
\bar{v}^s_{30}=\dfrac{\bar{v}^s_{00}\rho_1(\bar\alpha+\rho+1-2\bar s)}{(\bar\alpha + 1 - \bar s)(1-\bar s)(\bar\alpha+\rho+1-\bar s)},
\\&\nonumber
\bar{v}^s_{40}=\dfrac{\bar{v}^s_{00}\rho_1^2(\bar\alpha+\rho+1-2\bar s)}{(\bar\alpha + 1 - \bar s)(\bar \alpha- \bar s)(1-\bar s)(\bar\alpha+\rho+1-\bar s)},
\\&\nonumber
\bar{v}^s_{50}=\dfrac{\bar{v}^s_{00}\rho_1\rho_2(\bar\alpha+\rho+1-2\bar s)}{(\bar\alpha + 1 - \bar s)(\bar \alpha- \bar s)(1-\bar s)(\bar\alpha+\rho+1-\bar s)},
\\&\nonumber
\bar{v}^s_{60}=\dfrac{\bar{v}^s_{00}\rho_2(\bar\alpha+\rho+1-2\bar s)}{(\bar \alpha- \bar s)(1-\bar s)(\bar\alpha+\rho+1-\bar s)},
\\&\nonumber
\bar{v}^s_{70}=\dfrac{\bar{v}^s_{00}\rho_1\rho_2(\bar\alpha+\rho+1-2\bar s)}{(\bar\alpha + 1 - \bar s)(\bar \alpha- \bar s)(1-\bar s)(\bar\alpha+\rho+1-\bar s)},
\\&\nonumber
\bar{v}^s_{80}=\dfrac{\bar{v}^s_{00}\rho_2^2(\bar\alpha+\rho+1-2\bar s)}{(\bar\alpha + 1 - \bar s)(\bar \alpha- \bar s)(1-\bar s)(\bar\alpha+\rho+1-\bar s)},
\end{align}
where $\hat\Psi=\bar \alpha^2+\bar\alpha(2\rho-2-3\bar s)+\rho^2+\rho_1(2\rho_2+2-3\bar s)+\rho_2^2+\rho_2(2-3\bar s)+1-3\bar s+2\bar s^2$, and  $\psi_1$ and $ \psi_2 $ are given in \eqref{psi1psi}.

\bibliographystyle{IEEEtran}
\bibliography{Bibliography}
\end{document}